\newtheorem{theorem}{Theorem}
\newenvironment{proof}[1][Proof]{\textbf{#1.} }{\ \rule{0.5em}{0.5em}}
\begin{document}

\title{Power Control for Maximum Throughput in Spectrum Underlay Cognitive Radio Networks}


\author{
\authorblockN{John Tadrous, Ahmed Sultan, Mohammed Nafie and Amr El-Keyi}
\authorblockA{Wireless Intelligent Networks Center (WINC)\\
Nile University, Cairo, Egypt\\ 
Email: john.tadrous@nileu.edu.eg, \{asultan, mnafie, aelkeyi\}@nileuniversity.edu.eg}}
\maketitle

\begin{abstract}
We investigate power allocation for users in a spectrum underlay cognitive network. Our objective is to find a power control scheme that allocates transmit power for both primary and secondary users so that the overall network throughput is maximized while maintaining the quality of service (QoS) of the primary users greater than a certain minimum limit. Since an optimum solution to our problem is computationally intractable, as the optimization problem is non-convex, we propose an iterative algorithm based on sequential geometric programming, that is proved to converge to at least a local optimum solution. We use the proposed algorithm to show how a spectrum underlay network would achieve higher throughput with secondary users operation than with primary users operating alone. Also, we show via simulations that the loss in primary throughput due to the admission of the secondary users is accompanied by a reduction in the total primary transmit power. \footnote{This work was supported in part by a grant from the Egyptian National Telecommunications Regulatory Authority
} 
\end{abstract}

\section{Introduction}
\label{sec:Intro}

Cognitive radios, or secondary users, are spectrum agile devices that can sense the spectrum, identify the underutilized bands, and opportunistically share those bands with the original users (primary users) such that they do not cause harmful interference on those primary users \cite{FCC}. Hence, more users can share the spectrum efficiently and provide more utilization to the underutilized bands. However, when a secondary user is severely interfering with the primary users, it should abandon this primary band and try to exploit another underutilized one.

Cognitive radios can operate with a primary network using either overlay, or underlay spectrum sharing \cite{Ian}. In spectrum overlay, secondary users are allowed to access the primary bands that are totally free of primary operation. Consequently, the interference caused on primary users is almost negligible, and robust sensing is demanded of secondary users. In spectrum underlay, secondary users can share the same spectrum band concurrently with the primary users, so long as the interference caused on the primary users is small enough to guarantee minimum primary QoS. The secondary users do not have to employ sophisticated sensing mechanisms. Instead, efficient resource allocation procedure is needed so that the spectrum can be shared with high utilization.

In this paper we consider the problem of power allocation for primary and secondary users in a spectrum underlay system. Most of the existing work separates the power allocation for the secondary network from that of the primary network. In \cite{Chandramouli} and \cite{Le} the authors consider the power allocation for secondary users under maximum tolerable interference constraint on the primary users, besides minimum QoS constraint on each secondary user. They address two scenarios, the first is the maximization of the secondary network throughput in case that all of the secondary users can be supported under the given constraints. The second is the maximization of the number of admitted secondary users under their minimum QoS constraints when not all of them can be supported. The second scenario is also addressed in \cite{Globecom} where a lower complexity and more efficient algorithm is proposed. In \cite{Joint} the authors combine the power allocation and admission control for secondary users in one problem aiming at maximizing the overall secondary throughput.

In this work we consider an underutilized primary network and investigate the overall network throughput in the presence of cognitive radios. Our objective is to jointly allocate power to both primary and secondary users so that the overall network throughput is maximized, while protecting the QoS of the primary users by keeping their signal-to-interference-and-noise-ratio (SINR) above a minimum level. Since power allocation for maximum sum throughput in multiuser wireless networks is a non-convex problem, we introduce an iterative algorithm that is proved to converge to an optimal value (not necessarily the global) of the total sum throughput. We investigate the efficiency of the solutions to which the proposed algorithm converges via numerical simulations, and show that the network can achieve better utilization by incorporating secondary users operation. Also, we show the throughput and power prices paid by primary users in order to sustain the secondary operation and achieve higher utilization. Our results show that although the primary users may lose throughput due to the secondary operation, their total transmit power is significantly reduced.

The rest of this paper is organized as follows. In Section \ref{sec:SysMdl} we describe the underlay network model. In Section \ref{sec:ProbForm} we state the problem formulation. We introduce the proposed algorithm and establish its convergence in Section \ref{sec:IterAlgo}. Simulation results are presented in Section \ref{sec:Results} and the work is concluded in Section \ref{sec:Conc}.

\section{System Model}
\label{sec:SysMdl}

We consider a spectrum sharing model where secondary users share the same frequency band with primary users. In this model we focus on the uplink transmission from primary users toward their base station (BS), whereas the secondary users are communicating with each other in an ad-hoc fashion as shown in Fig. \ref{fig:SysMdl}.

\begin{figure}[htbp]
	\centering
		\includegraphics[width=.40\textwidth]{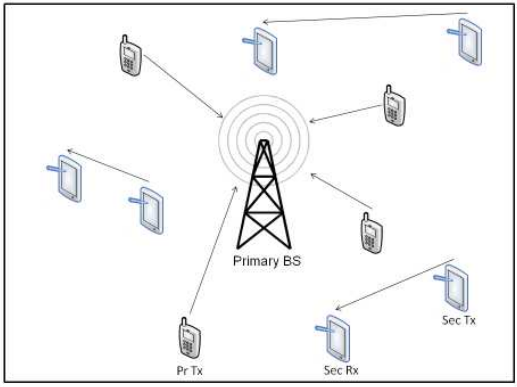}
	\caption{System Model}
	\label{fig:SysMdl}
\end{figure}

The primary network is assumed to have a single BS with $K$ primary users. We denote the transmit power of the $j^{th}$ primary user by $p_j^{p}$, $j=1,2,\cdots,K$, the channel gain from user $j$ to the base station by $g_{0,j}^{pp}$ and the SINR of primary user $j$ by $x_j^{p}$. Thus,

\begin{equation}
\label{eq:SINRp}
x_j^{p}=\frac{p_j^{p}g_{0,j}^{pp}}{\sum_{l=0,l\neq{j}}^{K}{p_l^{p}g_{0,l}^{pp}}+\sum_{i=0}^{N}{p_i^{s}g_{0,i}^{ps}}+N_0}, \ \ j=1,2,\cdots,K
\end{equation}
where there exist $N$ secondary links within the coverage area of the primary BS. Each secondary link $i$ comprises one transmitter and one corresponding receiver. The parameter $p_i^{s}$ is the transmit power of the transmitter of secondary link $i$, $g_{0,i}^{ps}$ is the channel gain between the $i^{th}$ secondary transmitter and the primary BS, and $N_0$ is the background noise power in the frequency band of operation. The channel gain between a secondary transmitter $i$ and a secondary receiver $j$ is $g_{j,i}^{ss}$, whereas the channel gain between a primary transmitter $i$ and a secondary receiver $j$ is $g_{j,i}^{sp}$. Hence, we can write the SINR of the secondary link $i$, $x_i^{s}$, as
\begin{equation}
\label{eq:SINRs}
x_i^{s}=\frac{p_i^{s}g_{i,i}^{ss}}{\sum_{l=0,l\neq{i}}^{N}{p_l^{s}g_{i,l}^{ss}}+\sum_{j=0}^{K}{p_j^{p}g_{i,j}^{sp}}+N_0}, \ \ i=1,2,\cdots,N
\end{equation}

A central controller is assumed to be responsible for the resource allocation for the network nodes. This controller is assumed to have knowledge of all of the channels in the system.

\section{Problem Formulation}
\label{sec:ProbForm}
For the given system, we consider the sum throughput. The network users are allocated powers so that the total throughput, primary and secondary, is maximized. However, in order to guarantee minimum QoS for the primary users, the power control scheme should guarantee that the SINR of each primary user does not fall below a minimum level. The QoS constraint on the primary users are
\begin{equation}
\label{eq:QoS_cons}
x^p_j\geq{\gamma^p_j}, \ \ \ \ j=1,2,\cdots,K
\end{equation}
This is equivalent to guaranteeing a minimum rate for each primary user.
Moreover, each transmitter, secondary or primary, cannot increase its transmit power beyond a certain maximum level. The power constraints on secondary and primary users are
\begin{equation}
\label{eq:Ps_cons}
p_i^s\leq{p_i^{smax}}, \ \ \ \ i=1,2,\cdots,N
\end{equation}
\begin{equation}
\label{eq:Pp_cons}
p_j^p\leq{p_j^{pmax}}, \ \ \ \ j=1,2,\cdots,K
\end{equation}

Using the definition of the throughput of user $i$ in bps/Hz as $\log_2{(1+SINR_i)}$, the optimization problem is written as

\begin{center}
\begin{equation}
\label{eq:Opt1}
\begin{tabular}{ r l }

\(\mbox{maximize}\) & \(\alpha_s\sum_{i=1}^{N}{\log_2{(1+x_i^s)}}+\alpha_p\sum_{j=1}^{K}{\log_2{(1+x_j^p)}}\) \\
\\
 \(\mbox{subject to}\) & \(x_j^p\geq{\gamma_j^p}, \	\ \ \ \ \ \ \ \ \ \ j=1,2,\cdots,K\) \\
\\
& \(0 \leq{p_i^s} \leq{p_i^{smax}}, \    \ i=1,2,\cdots,N\) \\
\\
& \(0 \leq{p_j^p} \leq{p_j^{pmax}}, \    \ j=1,2,\cdots,K\)
\end{tabular}
\end{equation}
\end{center}
where the optimization variables are $p_j^p$ and $p_i^s$ for $j=1,\cdots,K$ and $i=1,\cdots,N$. The SINRs: $x_j^p$ and $x_i^s$ are defined in \eqref{eq:SINRp} and \eqref{eq:SINRs}. Parameters $\alpha_s$ and $\alpha_p$ are control binary variables to switch between three different scenarios: 1) \emph{$\alpha_s=0$, $\alpha_p=1$} in which secondary users are not allowed to operate, 2) \emph{$\alpha_s=1$, $\alpha_p=1$} in which secondary users can share the spectrum with the primary users, and 3) \emph{$\alpha_s=1$, $\alpha_p=0$} in which the secondary sum throughput is maximized under the primary QoS constraints. We consider the second scenario in the next section. The optimization problem in \eqref{eq:Opt1} is a non-convex one and has no tractable solution \cite{Luo}. In \cite{Chandramouli}, a similar optimization problem is reduced to a convex problem, by omitting the unity term added to the SINR in each throughput term. That is, the system is assumed to operate in high SINR regime where $\log_2{(1+SINR_i)}\approx{\log_2{(SINR_i)}}$ and the problem is then converted to a geometric program and solved for a global optimum. But the difference in our formulation is that there are no constraints on the SINR of the secondary network, i.e., the system is opportunistic for the secondary users, consequently high SINR is not guaranteed for every user.

Also, the feasibility of the problem in (\ref{eq:Opt1}) necessitates that the primary network alone is feasible. That is, there exists a power allocation for every primary transmitter so that all the primary users can be supported exactly at their minimum QoS requirements. We assume that the proposed algorithm takes place when the primary network is feasible.

\section {Iterative Geometric Programming}
\label{sec:IterAlgo}

We convert the non-convex problem in \eqref{eq:Opt1} to a convex one by approximating the objective function by a monomial. We formulate the resulting problem as a geometric program that can be solved efficiently with polynomial computational complexity. Then, we iteratively update the approximate function and solve the geometric program again till convergence to at least a local maximum solution of the original problem.

\subsection{Geometric Programming} 
Before defining the geometric program we first introduce the following two terms:
\begin{itemize}
  \item A \emph{monomial} $m(\textbf{x})=Cx_1^{a_1}x_2^{a_2} \cdots x_n^{a_n}$ where $C$ is non-negative real constant, and $a_1, a_2, \cdots, a_n$ are real constants.
  \item A \emph{posynomial} $s(\textbf{x})=\sum_{k=1}^{K}{C_kx_1^{a_{1k}}x_2^{a_{2k}} \cdots x_n^{a_{nk}}}$ is a sum of monomials \cite{Boyd}.
\end{itemize}

A geometric program is defined as follows:

\begin{center}
\begin{equation}
\label{eq:OptGP}
\begin{tabular}{ r l }

\(\mbox{minimize}\) & \(f_0(\textbf{x})\) \\
 \(\mbox{subject to}\) & \(f_i(\textbf{x})\leq1, \    \ i=1,2,\cdots,l\) \\
& \(h_j(\textbf{x})=1, \    \ j=1,2,\cdots,m\)

\end{tabular}
\end{equation}
\end{center}
where, $\textbf{x}\in\Re_{++}^n$, $f_i(\textbf{x})$, $i=0,1,\cdots,l$ are posynomials and $h_j(\textbf{x})$, $j=1,2,\cdots,m$ are monomials.

Geometric programs can be simply converted into convex problems \cite{Boyd}. Since $\sum_{i=1}^{n}{\log_2{f(x_i)}}=\log_2{\prod_{i=1}^{n}{f(x_i)}}$, the optimization problem (\ref{eq:Opt1}) can be written as: 
\begin{center}
\begin{equation}
\label{eq:OptGP2}
\begin{tabular}{  l  }
\( \mbox{maximize}_{\textbf{p}^s,\textbf{p}^p,\textbf{x}^s,\textbf{x}^p} \ \ \prod_i^{N}{(1+x_{i=1}^s)\prod_{j=1}^{K}{(1+x_j^p)}} \)
\\
\\
 \(\mbox{subject to}\) \\
 \\
  \(\gamma_i^p{x_i^p}^{-1}\leq{1}, \    \ i=1,2,\cdots,K\) \\ 
\\  \(\frac{x_j^p(\sum_{k=1,k\neq{j}}^{K}{p_k^pg_{0,k}^{pp}}+\sum_{l=1}^{N}{p_l^sg_{0,l}^{ps}+N_0})}{(p_j^pg_{0,j}^{pp})}\leq{1}, \    \ j=1,2,\cdots,K\) \\
\\ \(\frac{x_i^s(\sum_{j=1,j\neq{i}}^{N}{p_j^sg_{i,j}^{ss}}+\sum_{k=1}^{K}{p_k^pg_{i,k}^{sp}+N_0})}{(p_i^sg_{i,i}^{ss})}\leq{1}, \    \ i=1,2,\cdots,N\) \\
\\
 \( p_j^p({p_j^{pmax}})^{-1} \leq{1}, j=1,2,\cdots,K\)\\
\\
 \(p_i^s({p_i^{smax}})^{-1}\leq{1}, i=1,2,\cdots,N \)\\
\end{tabular}
\end{equation}
\end{center}
where $\textbf{p}^s$ and $\textbf{p}^p$ are the vectors of secondary and primary power allocations, respectively, and $\textbf{x}^s$ and $\textbf{x}^p$ are vectors of SINR allocations of secondary and primary users. We note that the constraints of the optimization problem (\ref{eq:OptGP2}) satisfy the conditions of posynomials. However, the conversion of the objective from maximization to minimization while maintaining the posynomial form is not achievable. For this reason we approximate the objective function, $f_0(\textbf{x}^s,\textbf{x}^p)$, to a monomial $\tilde{f}_0(\textbf{x}^s,\textbf{x}^p)$ where
\begin{equation}
\label{eq:monomial_approx}
\tilde{f}_0(\textbf{x}^s,\textbf{x}^p)=c\prod_{i=1}^{N}{{(x_i^s)}^{\lambda_i^s}}\prod_{j=1}^{K}{(x_j^p)^{\lambda_j^p}}
\end{equation}
consequently, the approximate optimization problem will be 
\begin{center}
\begin{equation}
\label{eq:Opt_Approx}
\begin{tabular}{  l  }
\(\mbox{maximize}_{\textbf{p}^s,\textbf{p}^p,\textbf{x}^s,\textbf{x}^p} \ \ c\prod_{i=1}^{N}{(x_i^s)^{\lambda_i^s}}\prod_{j=1}^{K}{(x_j^p)^{\lambda_j^p}} \) 
\\
\\
 \(\mbox{subject to}\) \\
\\
\\
  \(\gamma_i^p{x_i^p}^{-1}\leq{1}, \    \ i=1,2,\cdots,K\) \\
\\ \(\frac{x_j^p(\sum_{k=1,k\neq{j}}^{K}{p_k^pg_{0,k}^{pp}}+\sum_{l=1}^{N}{p_l^sg_{0,l}^{ps}+N_0})}{(p_j^pg_{0,j}^{pp})}\leq{1}, \    \ j=1,2,\cdots,K\) \\
\\ \(\frac{x_i^s(\sum_{j=1,j\neq{i}}^{N}{p_j^sg_{i,j}^{ss}}+\sum_{k=1}^{K}{p_k^pg_{i,k}^{sp}+N_0})}{(p_i^sg_{i,i}^{ss})}\leq{1}, \    \ i=1,2,\cdots,N\) \\
\\
 \( p_j^p({p_j^{pmax}})^{-1} \leq{1}, j=1,2,\cdots,K\)\\
\\
 \(p_i^s({p_i^{smax}})^{-1}\leq{1}, i=1,2,\cdots,N \)\\
\end{tabular}
\end{equation}
\end{center}
where the parameters $c$, $\lambda_i^s$, and $\lambda_j^p$, $i=1,\cdots,N$ and $j=1,\cdots,K$ are determined iteratively so that, for any two successive iterations $k$ and $k-1$ we always have non-decreasing sum throughput, i.e.,
\begin{equation}
\label{eq:non_dec}
f_0(\textbf{x}_0^{s(k-1)},\textbf{x}_0^{p(k-1)}) \leq{f_0(\textbf{x}_0^{s(k)},\textbf{x}_0^{p(k)})}
\end{equation}
where $(\textbf{x}_0^{s(k)},\textbf{x}_0^{p(k)})$ is the solution to the approximate problem \eqref{eq:Opt_Approx} for the $k^{th}$ iteration.

\subsection{Determination of $\tilde{f}_0(\textbf{x}^s,\textbf{x}^p)$ for Non-decreasing Throughput}

In this subsection, we consider the complete determination of $\tilde{f}_0(\textbf{x}^s,\textbf{x}^p)$ in a certain iteration so that the inequality of \eqref{eq:non_dec} holds. Let $(\textbf{x}_0^{s},\textbf{x}_0^{p})$ be the solution to the approximate problem \eqref{eq:Opt_Approx} in the previous iteration. Then, to have a sum throughput in the following iteration that is at least no less than that corresponding to $(\textbf{x}_0^{s},\textbf{x}_0^{p})$, the approximate function  $\tilde{f}_0(\textbf{x}^s,\textbf{x}^p)$ should satisfy the following conditions:
\begin{enumerate}
\item $f_0(\textbf{x}^s,\textbf{x}^p)\geq{\tilde{f}_0(\textbf{x}^s,\textbf{x}^p)}$ for all $\textbf{x}^s,\textbf{x}^p$,
\item $f_0(\textbf{x}_0^s,\textbf{x}_0^p)=\tilde{f}_0(\textbf{x}_0^s,\textbf{x}_0^p)$,
\item $\nabla f_0(\textbf{x}_0^s,\textbf{x}_0^p)=\nabla \tilde{f}_0(\textbf{x}_0^s,\textbf{x}_0^p)$.
\end{enumerate}    

The above three conditions are sufficient to guarantee that the solution of each approximate problem increases the objective function, and after convergence of the series of the approximate problems, the Karush-Kuhn-Tucker conditions of the original problem in \eqref{eq:OptGP2} will be satisfied \cite{Chiang,Old}.
Let $\textbf{x}=[x_1, x_2 \cdots, x_{N+K}]$ where $x_j=x_j^p$, $j=1,2,\cdots,K$, $x_j=x_i^s$, $j=K+i$, $i=1,2,\cdots,N$, $\Lambda=[\lambda_1, \lambda_2, \cdots, \lambda_{N+K}]$ where $\lambda_j=\lambda_j^p$, $j=1,2,\cdots,K$, $\lambda_j=\lambda_i^s$, $j=K+i$ and $i=1,2,\cdots,N$.

From the third condition we have:
\begin{equation*}
\nabla f_0(\textbf{x}_0)=\nabla \tilde{f}_0(\textbf{x}_0)
\end{equation*}
Therefore,
\begin{equation}
\label{eq:lambda}
\lambda_i=\frac{x_{0i}}{1+x_{0i}}, \ \ \ \ i=1,2,\cdots,N+K 
\end{equation}

From the second condition we have:
\begin{equation*}
f_0(\textbf{x}_0)=\tilde{f}_0(\textbf{x}_0) \\ 
\end{equation*}
\begin{equation*}
c\prod_{i=1}^{N+K}{{x_{0i}}^{\lambda_i}}=\prod_{i=1}^{N+K}{(1+x_{0i})}
\end{equation*}
\begin{equation}
\label{eq:c}
c=\frac{\prod_{i=1}^{N+K}{(1+x_{0i})}}{\prod_{i=1}^{N+K}{(x_{0i})}^{\lambda_i}}
\end{equation}

By obtaining $c$, $\lambda_i^s$ and $\lambda_j^p$ for $i=1,\cdots,N$ and $j=1,\cdots,K$, the approximate objective function $\tilde{f}(\textbf{x}_0^s,\textbf{x}_0^p)$ is determined for the next iteration. What remains is to show how would the new approximate function satisfy the first condition. It is required to prove that
\begin{equation*}
\prod_{i=1}^{N+K}{(1+x_i)}\geq{c\prod_{i=1}^{N+K}{x_i^{\lambda_i}}}
\end{equation*}
Equivalently by substituting for $c$ and $\lambda_i$ from \eqref{eq:c} and \eqref{eq:lambda},
\begin{equation}
\label{eq:Ineq}
\prod_{i=1}^{N+K}{(1+x_i)}\geq{\prod_{i=1}^{K+N}{\frac{(1+x_{0i})}{x_{0i}^{(\frac{x_{0i}}{1+x_{0i}})}}.x_i^{(\frac{x_{0i}}{1+x_{0i}})}}}
\end{equation}

\begin{theorem}
Fix $i$, for all $x_i\geq{0}$ the following inequality is satisfied:
\begin{equation}
\label{eq:Fix_i}
1+x_i\geq{\frac{(1+x_{0i})}{x_{0i}^{(\frac{x_{0i}}{1+x_{0i}})}}.x_i^{(\frac{x_{0i}}{1+x_{0i}})}}, \ \ \ x_i\geq{0}
\end{equation}
\end{theorem}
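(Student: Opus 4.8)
The plan is to fix $i$, abbreviate $x = x_i$ and $a = x_{0i}$, and set $\lambda = \frac{a}{1+a}$, which is exactly the exponent appearing in \eqref{eq:Fix_i}. Since $a = x_{0i} \geq 0$, this exponent obeys $0 \leq \lambda < 1$, and that single bound is the fact on which everything turns. With this notation the claim becomes $1 + x \geq \frac{1+a}{a^{\lambda}}\,x^{\lambda}$, and I would prove it by the standard ``a smooth function dominates the tangent built at $a$'' argument --- which is precisely what conditions (2) and (3) stated just before the theorem are engineered to supply.

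First I would introduce the gap function $h(x) = (1+x) - \frac{1+a}{a^{\lambda}}\,x^{\lambda}$ on $x \geq 0$ and record its boundary value $h(0) = 1 > 0$ (using $\lambda > 0$, so $x^{\lambda}\to 0$ as $x\to 0$). Next I would verify the equality and tangency that hold by construction: direct substitution gives $h(a) = (1+a) - (1+a) = 0$, and differentiating gives $h'(x) = 1 - \frac{1+a}{a^{\lambda}}\,\lambda\, x^{\lambda-1}$, which at $x = a$ reduces to $1 - (1+a)\lambda/a = 1 - 1 = 0$ once $\lambda = a/(1+a)$ is inserted. Thus $x = a$ is a stationary point of $h$ at which $h$ vanishes, matching conditions (2) and (3).

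The heart of the argument, and the step I expect to be the real obstacle, is upgrading this stationary point from a local to a \emph{global} minimum; this is exactly where $\lambda < 1$ must be used. Because $\lambda - 1 < 0$, the map $x \mapsto x^{\lambda-1}$ is strictly decreasing on $(0,\infty)$, so $h'(x) = 1 - (\text{positive constant})\,x^{\lambda-1}$ is strictly increasing and changes sign exactly once, from negative to positive, at $x = a$. Hence $h$ decreases on $(0,a)$ and increases on $(a,\infty)$, so its only interior minimum is at $x = a$ with value $0$; together with $h(0) > 0$ this forces $h(x) \geq 0$ for all $x \geq 0$, which is \eqref{eq:Fix_i}.

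Finally I would note a cleaner equivalent viewpoint worth recording: substituting $u = \log x$ recasts the claim as $\log(1 + e^{u}) \geq \log\frac{1+a}{a^{\lambda}} + \lambda u$, whose right-hand side is exactly the tangent line at $u = \log a$ of the function $u \mapsto \log(1 + e^{u})$. Since this function is convex --- its second derivative is $e^{u}/(1+e^{u})^{2} > 0$ --- it lies above each of its tangents, which gives the inequality at once. In either formulation the only nonroutine point is the globality of the minimum, and once $0 \leq \lambda < 1$ is exploited (equivalently, the convexity of $\log(1+e^{u})$), the remaining equality and tangency checks are the routine verifications that conditions (2)--(3) were designed to produce.
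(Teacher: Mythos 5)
Your proof is correct, and it reaches the same tangency point $x_i = x_{0i}$ that the paper's proof pivots on, but you get to the global statement by a different and in fact tighter route. The paper takes logarithms and studies $g(x_i)=\log\frac{1+x_{0i}}{1+x_i}+\frac{x_{0i}}{1+x_{0i}}\log\frac{x_i}{x_{0i}}$: it locates the unique stationary point $x_i^*=x_{0i}$, checks $g(x_{0i})=0$, and then verifies only that the \emph{second derivative at that point} is negative --- i.e.\ a local-maximum test --- before concluding $g\leq 0$ everywhere. Strictly speaking that leaves the local-to-global step implicit (one still needs uniqueness of the stationary point together with the boundary behavior $g\to-\infty$ as $x_i\to 0^+$ and $x_i\to\infty$). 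Your argument closes exactly this gap: by writing $h'(x)=1-Cx^{\lambda-1}$ with $C>0$ and $\lambda-1<0$, you get that $h'$ is strictly increasing with a single sign change at $x=a$, so the minimum of $h$ is genuinely global, not merely local. Your closing observation is the cleanest formulation of all: with $u=\log x$ the claim is the tangent-line inequality for the convex function $u\mapsto\log(1+e^u)$, which proves the theorem in one line and also explains structurally why conditions (1)--(3) on $\tilde f_0$ can be met by a monomial --- the monomial is precisely the exponentiated tangent of the log-domain objective. The only caveat in both your write-up and the paper's is the degenerate case $x_{0i}=0$ (where $\lambda=0$ and $a^{\lambda}$, $\log x_{0i}$ need a convention), but that case is trivial since the right-hand side of \eqref{eq:Fix_i} then reduces to a constant $\leq 1+x_i$.
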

\begin{proof}
By combining both sides of \eqref{eq:Fix_i}, and taking the $\log$ we define:
\begin{equation}
\label{eq:log}
g(x_i)=\log{\frac{1+x_{0i}}{1+x_i}}+\frac{x_{0i}}{1+x_{0i}}\log{\frac{x_i}{x_{0i}}}
\end{equation}
Now we prove that $g(x_i)\leq{0}$ by showing that the maximum of $g(x_i)$ is zero. First, the stationary point is obtained by:
\begin{equation*}
\frac{\partial{g(x_i)}}{\partial{x_i}}|_{x_i=x^*_i}=0
\end{equation*}
\begin{equation*}
\frac{-1}{1+x^*_i}+\frac{x_{0i}}{x_i^*(1+x_{0i})}=0 \ \ \Rightarrow{\ \ x^*_i=x_{0i}}
\end{equation*}
By substituting with $x_i^*$ in \eqref{eq:log}:
\begin{equation*}
g(x_i^*=x_{0i})=0
\end{equation*}
The second derivative of $g(x_i)$ is:
\begin{equation*}
\frac{\partial^2{g(x_i^*)}}{\partial{x_i^{*^2}}}=\frac{-1/x_{0i}}{(1+x_{0i})^2}
\end{equation*}
which is negative for $x_{0i}\geq{0}$.
Therefore, $g(x_i)\leq{0}$ for $x_i\geq{0}$, and consequently the inequality of \eqref{eq:Fix_i} is satisfied.

Since the inequality of \eqref{eq:Fix_i} is satisfied for all $i=1,2,\cdots,N+K$, inequality \eqref{eq:Ineq} is also satisfied, and the first condition on the approximate function is maintained.
\end{proof}


Our proposed iterative algorithm is implemented as follows:
\begin{enumerate}
\item Choose positive initial values for c,$\lambda_i$, $i=1,\cdots,N+K$.
\item Solve the approximate problem in \eqref{eq:Opt_Approx}.
\item If a target accuracy is reached, stop the algorithm. Otherwise go to step 4).
\item Calculate new values of $\lambda_i$ and $c$ from \eqref{eq:lambda}, \eqref{eq:c}, then go to step 2).
\end{enumerate}

Note that, the proposed technique can be applied to any sum throughput maximization problem where the constraints are convex.
In \cite{Sequential}, an approach to solve the sum throughput maximization in interference channels is proposed based on single and double condensation methods \cite{Chiang}. The single condensation method is shown to be outperformed by the double condensation method. However the approximate objective function of the double condensation method was not proved to satisfy the first condition in Section. \ref{sec:IterAlgo}.

\section{Numerical Results}
\label{sec:Results}
The simulation setup we use is close to that used in \cite{Le}. We assume that the secondary transmitters and the primary users are located in an area of size $1000m\times{1000}m$ with BS of the primary network located at the center. The receiving node of each secondary link is placed randomly in a $500m\times{500}m$ square with its transmitting node at the center. The channel gain between any transmitter $j$ and receiver $i$ is modeled as $g_{i,j}=K_0\times 10^{\beta_{i,j}/10}\times d_{i,j}^{-4}$, where $d_{i,j}$ is the corresponding distance, $\beta_{i,j}$ is a random Gaussian variable with zero mean and a standard deviation of 6 dB to account for shadowing effects, and $K_0=10^3$ is a factor that includes some system parameters such as antenna gain and carrier frequency. The background noise power is $N_0=10^{-12}$ watt. The maximum transmit power on each secondary transmitter is $p^{smax}=0.1$ watt, and on each primary transmitter $p^{pmax}=1$ watt. For simulations we use the \texttt{CVX}, a package for specifying and solving convex programs \cite{CVX}.
\subsection{Convergence of the Iterative Algorithm}
We compare the value of the sum throughput to which the proposed algorithm converges with the optimum solution obtained by exhaustive search, for this simulation we consider $N=2$ and $K=2$, the minimum QoS for the primary transmitters is chosen to be $\gamma^p=6$ dB. For the feasibility of the primary network we assume primary and secondary users employ CDMA with primary processing gain $PG^p=20$, and secondary processing gain $PG^s=80$. We run five different simulations where in each one the locations of all users are generated randomly. The exhaustive search result is obtained by discretizing the transmit power range of each user into $100$ samples and trying all possible combinations of these samples, then picking the one that achieves the maximum sum throughput under the QoS constraints. We start the iterative algorithm with all exponents in $\tilde{f}$ equal to one, i.e., $\Lambda=[1, 1, 1, 1]$.
Fig. \ref{fig:ExComp} shows how the iterative algorithm almost converges to the optimum solution for small number of users. The proposed algorithm, however, is not guaranteed to converge to the global solution in general situations. 
\begin{figure}
	\centering
		\includegraphics[width=0.45\textwidth]{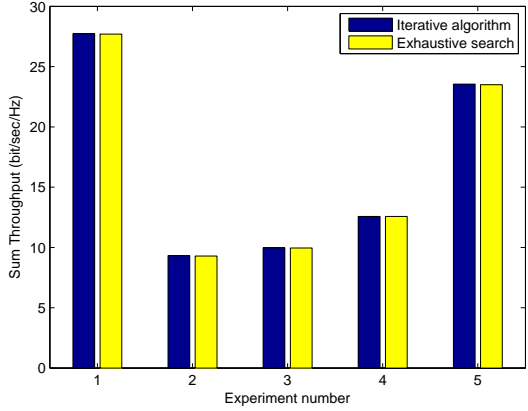}
	\caption{Total network throughput in different simulations for both the iterative and exhaustive search schemes $N=2$ and $K=2$}
	\label{fig:ExComp}
\end{figure}
The convergence in a low SINR regime is shown in Fig. \ref{fig:Convergence}, where a simulation is run under unity processing gain for both primary and secondary users. And the minimum QoS for the primary users is $\gamma^p=-3$ dB, while the initial condition is not changed. Here, we note that, the initial condition of equal values of $\lambda$ means that the system is at high SINR which is not true in the current scenario, and that is obvious from the solution of the first iteration, however the proposed algorithm then iteratively increases the throughput till it reaches a local maximum that is the exhaustive search solution in this setup. 
\begin{figure}
	\centering
		\includegraphics[width=0.45\textwidth]{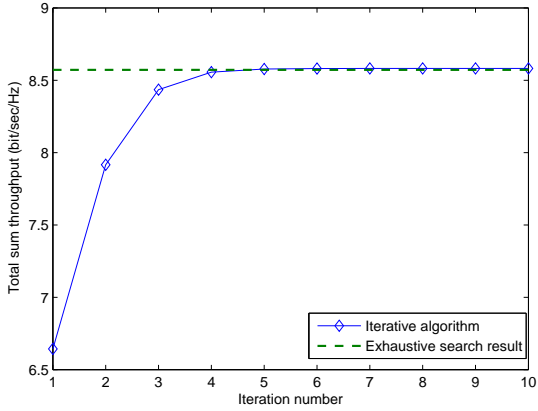}
	\caption{Convergence of the proposed algorithm in  low SINR regime for $N=2$ and $K=2$}
	\label{fig:Convergence}
\end{figure}

\subsection{Network Utilization and Primary Cost}
In this subsection we compare between the total throughput of the primary network alone, and the total throughput of the overall underlay network. In this simulation we assume $N=3$ users, and $K=5$. Both primary and secondary users use processing gain of $80$. The minimum QoS for each primary user is $\gamma^p=8$ dB. We use the proposed iterative algorithm in the three scenarios discussed in Section \ref{sec:ProbForm}. The initial components of the vector $\Lambda$ are equal. We run five experiments, where in each one, the locations of all users are generated randomly and the three optimization scenarios take place. For each optimization scenario we calculate the total sum throughput, the primary sum throughput and the total primary transmit power. The results showing those three quantities are depicted in Fig. \ref{fig:Th_tot}, Fig. \ref{fig:Th_pri} and Fig. \ref{fig:P_pri}, respectively.
\begin{figure}
	\centering
		\includegraphics[width=0.45\textwidth]{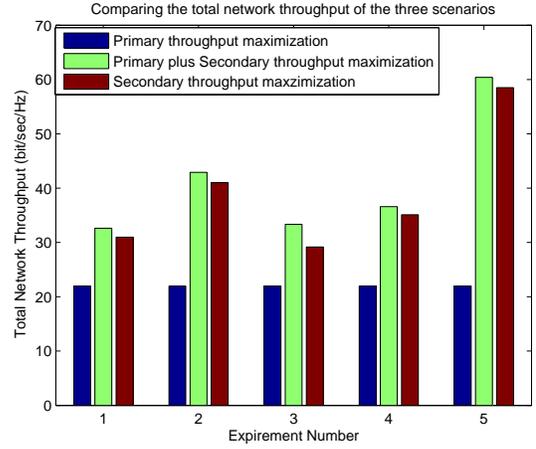}
	\caption{Total underlay network throughput for $N=3$ and $K=5$}
	\label{fig:Th_tot}
\end{figure}
\begin{figure}
	\centering
		\includegraphics[width=0.45\textwidth]{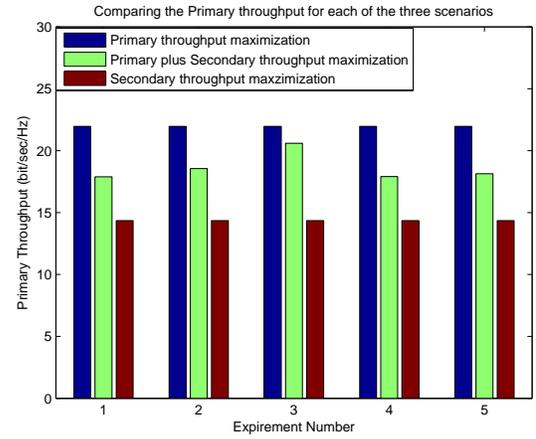}
	\caption{Primary network sum throughput for $N=3$ and $K=5$}
	\label{fig:Th_pri}
\end{figure}
\begin{figure}
	\centering
		\includegraphics[width=0.45\textwidth]{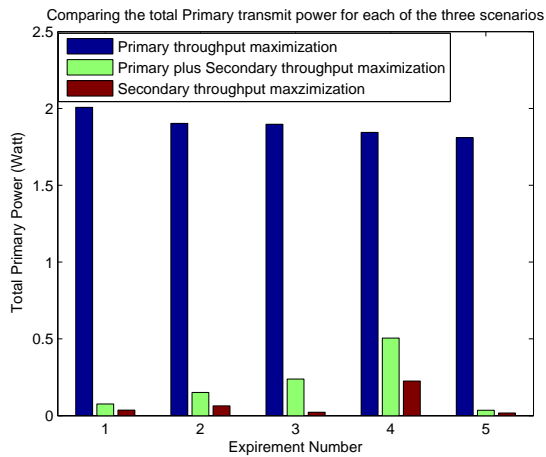}
	\caption{Total primary transmit power for $N=3$ and $K=5$}
	\label{fig:P_pri}
\end{figure}
 From Fig. \ref{fig:Th_tot} it is clear that the second scenario in which the primary-plus-secondary sum throughput is maximized is the one that leads to higher utilization in terms of total underlay network throughput. In Fig. \ref{fig:Th_pri} the primary throughput is shown, and it is clear that the concurrent operation of the secondary and primary users leads to degradation of the throughput that the primary network can achieve with no secondary operation. However, the throughput loss of the primary users in the second scenario is not severe and a cooperative primary network may tolerate this throughput price for the sake of more utilization. Fig. \ref{fig:P_pri} shows the total transmit power of the primary network in each of the three scenarios. Although the primary users lose some throughput due to the secondary operation, their total transmit power is reduced significantly as a consequence of the throughput loss. Since the third scenario yields minimum transmit primary power under maximization of the secondary network throughput, the third scenario can be used in an optimization problem that aims at maximizing the secondary sum throughput while the primary users are kept at their minimum possible transmit power with QoS constraints.

\section{Conclusion}
\label{sec:Conc}
We have considered the resource allocation for spectrum underlay cognitive radio networks. Our objective is to allocate power jointly to primary and secondary users so that the total underlay network throughput is maximized. In the optimization problem the primary users are protected against the opportunistic behavior by setting a minimum QoS constraint for each of them. Since the objective function of the optimization problem is non-convex, we introduce an iterative algorithm based on successive solutions to approximate geometric programs. We provide a mathematical proof of the convergence of the proposed algorithm and show that the algorithm converges always to one of the local optima of the original problem. Then we show via simulations the efficient results to which the algorithm converges compared to the exhaustive search method. Finally, we show how the algorithm is used to increase the overall network throughput (utilization) with relatively small loss in primary throughput and with considerable decrease in total primary transmit power.

\end{document}